%% file: main.tex
\documentclass[aps,pra,reprint,superscriptaddress,longbibliography,amsmath,amssymb]{revtex4-2}
\usepackage{iftex}
\ifPDFTeX\pdfoutput=1\fi

\usepackage[utf8]{inputenc}
\usepackage[english]{babel}
\usepackage[T1]{fontenc}
\usepackage{lmodern}
\usepackage{amsthm,mathtools}
\usepackage{microtype}
\usepackage[colorlinks=true,linkcolor=blue,citecolor=blue,urlcolor=blue]{hyperref}
\usepackage[dvipsnames]{xcolor}
\usepackage{tikz}
\usetikzlibrary{arrows.meta,positioning}
\usepackage{etoolbox}
\usepackage{needspace}
\AtBeginEnvironment{thebibliography}{\interlinepenalty=10000\relax}

\newtheorem{theorem}{Theorem}[section]
\newtheorem{corollary}[theorem]{Corollary}
\newtheorem{proposition}[theorem]{Proposition}
\theoremstyle{definition}
\newtheorem*{definition}{Definition}
\theoremstyle{plain}

\DeclareMathOperator{\Tr}{Tr}
\DeclareMathOperator{\CF}{CF}
\DeclareMathOperator{\sn}{sn}
\newcommand{\Z}{\mathbb Z}

\begin{document}

\title{Logical Bell Inequalities for Marginal Wigner Negativity}

\author{Sanchit Srivastava}
\email[]{sanchit.srivastava@uwaterloo.ca}
\affiliation{Institute for Quantum Computing, University of Waterloo, Waterloo, Ontario, Canada N2L 3G1}
\affiliation{Department of Physics and Astronomy, University of Waterloo, Waterloo, Ontario, Canada N2L 3G1}
\author{Selman Ipek}
\affiliation{Institute of Theoretical Physics, Leibniz University Hannover, Germany}
\author{Cihan Okay}
\affiliation{Department of Mathematics, Bilkent University, Ankara, Turkey}
\author{Shohini Ghose}
\affiliation{Institute for Quantum Computing, University of Waterloo, Waterloo, Ontario, Canada N2L 3G1}
\affiliation{Department of Physics and Astronomy, University of Waterloo, Waterloo, Ontario, Canada N2L 3G1}
\affiliation{Department of Physics and Computer Science, Wilfrid Laurier University, Waterloo, Ontario, Canada N2L 3C5}
\hypersetup{
  pdftitle={Logical Bell Inequalities for Marginal Wigner Negativity},
  pdfauthor={Sanchit Srivastava, Selman Ipek, Cihan Okay, Shohini Ghose}
}

\begin{abstract}
We construct logical Bell inequalities whose violations are determined by
the Wigner function of a reduced quantum state.  For odd-prime qudits,
each phase-space point of any nonempty proper subsystem defines an
inequality, violated precisely when the corresponding Gross Wigner value
is negative.  This correspondence allows us to quantify the contextuality
that a reduced state guarantees for the composite system.  The normalized
violation gives a lower bound on the contextual fraction in the complete
Weyl measurement scenario, valid for every joint state with that marginal,
irrespective of its correlations.  We identify a family of product states
that saturates this bound.  For these states, one marginal Wigner value
fixes the contextual fraction of the entire system.
\end{abstract}

\maketitle

\section{Introduction}
\label{sec:introduction}

Clifford stabilizer circuits can be simulated efficiently on a classical
computer \cite{AaronsonGottesman2004,HostensEtAl2005}.  Injecting suitable
nonstabilizer states, known as magic states, into these circuits enables the
implementation of the non-Clifford gates needed for universal quantum computation
\cite{BravyiKitaev2005}.  For odd-prime qudits, the discrete Wigner
representation in Ref.~\cite{Gross2006} assigns nonnegative phase-space
functions to stabilizer preparations and measurements.  Clifford transformations permute the
phase-space points and therefore preserve this nonnegativity
\cite{Gross2006}.  When the input qudits are
prepared independently and have nonnegative Wigner functions, a classical
computer can efficiently reproduce the circuit's output statistics by sampling
phase-space points.  This simulation also applies to nonstabilizer input
states with nonnegative Wigner functions.
To achieve a quantum computational speedup in this setting, the injected states
must have Wigner negativity \cite{VeitchEtAl2012}.  Wigner
sum negativity quantifies this resource by summing the magnitudes of the
negative Wigner values.  It is a magic monotone: it cannot increase on average
under stabilizer operations \cite{VeitchEtAl2014}.

A measurement context consists of observables that can be measured jointly.
A Kochen--Specker noncontextual model reproduces the statistics of all
contexts using a probability distribution over consistent outcome assignments
\cite{KochenSpecker1967,AbramskyBrandenburger2011}.  Each assignment gives
an observable the same outcome in every context and respects the relations
among jointly measured observables.  Statistics that admit no such model
are contextual.  Which statistics the model must reproduce depends on the
observables and contexts included in the measurement scenario.  We consider
the complete Weyl scenario, which contains every Weyl observable and every
maximal commuting context.  For systems of at least two odd-prime qudits,
the authors of Ref.~\cite{DelfosseEtAl2017} proved that a state is contextual
in this scenario exactly when its Gross-Wigner function takes a negative
value.  For two or more qubits, the complete Pauli scenario is
state-independently contextual, so contextuality in that scenario does not
distinguish magic states from stabilizer states
\cite{Peres1990,Mermin1990,OkayEtAl2017}.

We ask how much contextuality of a composite system can be certified from
one of its reduced states.  The reduced state's Wigner function is obtained
by summing the composite Wigner function over the phase-space coordinates
of the other subsystem \cite{Gross2006}.  A negative marginal value
therefore certifies negativity of the joint state.  In
Ref.~\cite{HowardEtAl2014}, the authors established, for odd-prime qudits,
an equivalence between single-qudit Wigner negativity and violation of
their two-qudit stabilizer noncontextuality inequalities.  We extend this
marginal test to arbitrary nonempty proper subsystems and obtain a
quantitative lower bound on the contextuality of the composite state.

We formulate this extension using logical Bell inequalities
\cite{AbramskyHardy2012}, which translate logical constraints on measurement
outcomes into probability bounds for noncontextual models.  Each phase-space
point of the chosen subsystem labels one inequality.  For a subsystem of
$k$ qudits with dimension $D_A=d^k$ and reduced
state $\sigma$, we write $W_\sigma(u)$ for its Gross Wigner function at the
phase-space point $u$.  The selected-event probabilities have a
noncontextual lower bound of one and a quantum sum $1+D_AW_\sigma(u)$
(Theorem~\ref{thm:marginal-lbi}).  Thus the inequality is violated exactly
when the chosen marginal Wigner value is negative.

The size of this violation also tells us how much contextuality the
marginal certifies.  For a joint state $\rho$ with reduced state $\sigma$,
let $e_\rho$ denote the full collection of Weyl measurement probabilities.
Its contextual fraction $\CF(e_\rho)$ is the smallest contextual weight in
a convex decomposition into a noncontextual model and a compatible
remainder \cite{AbramskyBarbosaMansfield2017}.  We show that
\begin{equation}
 \CF(e_\rho)\geq[-D_AW_\sigma(u)]_+,
 \label{eq:intro-cf-bound}
\end{equation}
where $[t]_+=\max\{t,0\}$ (Corollary~\ref{cor:marginal-cf-bound}).  This
bound holds for every joint state with the specified marginal, including
correlated states.  For a fixed subsystem, the smallest Wigner value gives
the strongest bound in this family.  If several subsystem marginals are
available, we take the largest of their bounds.
Evaluating one such probability sum uses $D_A+1$ Weyl measurement settings
on the chosen subsystem, each paired with the same fixed setting on the
other subsystem.  This number of settings is independent of the number
of remaining qudits.

The noncontextual bound requires each global assignment to give a valid
joint outcome in every context of the complete Weyl scenario.
The selected contexts by themselves admit a noncontextual model for every
compatible probability table.  A violation therefore rules out extending
those selected-context statistics to a noncontextual model on all Weyl
contexts.

We identify a family of product states that attains the marginal lower
bound and therefore minimizes the contextual fraction among all joint
states with the same reduced states.  The chosen marginal has one
nonpositive Wigner value and equal positive values elsewhere, while the
other subsystem can be in any
Wigner-nonnegative state.  For this family, a single marginal Wigner value determines the
contextual fraction of the composite state exactly.  Correlations can
increase the contextual fraction while leaving both reduced states
unchanged, as shown by the separable preparations in
Appendix~\ref{app:fixed-marginals}.

Section~\ref{sec:framework} introduces the
Weyl measurement setting and noncontextual models.  Section~\ref{sec:lbi}
derives the marginal logical Bell inequality and contextual-fraction bound,
examines the choice of subsystem, and explains why the noncontextual bound
requires the complete Weyl scenario.
Section~\ref{sec:exact-cf} constructs the product family for which marginal
Wigner negativity determines the contextual fraction exactly.
Section~\ref{sec:discussion} discusses the interpretation and scope of the
results.  Appendix~\ref{app:fixed-marginals} gives a special construction of
composite states with the same reduced states and different contextual
fractions.

\section{Weyl measurements and noncontextual models}
\label{sec:framework}

\subsection{Weyl outcomes and Wigner functions}

Discrete Wigner representations recover measurement probabilities by
summing over phase-space lines \cite{Wootters1987,GibbonsEtAl2004}.
For the Weyl measurements considered here, the summation regions are the
subsets of phase space that label complete joint outcomes.

Let $d$ be an odd prime, let $\omega=e^{2\pi i/d}$, and write $\Z_d$ for the
ring of integers modulo $d$.  Since $d$ is prime, this ring is a field.
On one qudit, the shift and phase operators act on the computational basis as
\begin{equation}
 X\lvert j\rangle=\lvert j+1\rangle,
 \qquad
 Z\lvert j\rangle=\omega^j\lvert j\rangle,
\end{equation}
where $j\in\Z_d$ and addition is modulo $d$.  For $r$ qudits, the
Hilbert-space dimension is $D_r=d^r$.  We label the Weyl operators by vectors
in the discrete symplectic phase space $V_r=\Z_d^{2r}$
\cite{HostensEtAl2005,Gross2006}.  If $v=(p,q)$ and $w=(p',q')$, with
$p,q,p',q'\in\Z_d^r$, their symplectic pairing is
\begin{equation}
 [v,w]=p\mathbin{\cdot}q'-p'\mathbin{\cdot}q.
 \label{eq:symplectic-form}
\end{equation}
Each label specifies a tensor product of shifts and phases.  With $1/2$
denoting the inverse of $2$ in $\Z_d$, we fix the Weyl convention as
\begin{equation}
 T_r(v)=\bigotimes_{i=1}^r
  \omega^{p_i q_i/2} X_i^{p_i} Z_i^{q_i},
 \label{eq:weyl-definition}
\end{equation}
so that
\begin{equation}
 T_r(v) T_r(w)=\omega^{-[v,w]/2} T_r(v+w).
 \label{eq:weyl-product}
\end{equation}
The symplectic pairing therefore determines which operators can be measured
jointly: $T_r(v)$ and $T_r(w)$ commute exactly when $[v,w]=0$.  In that case
their product is $T_r(v+w)$ \cite{HostensEtAl2005,Gross2006}.

Measuring a Weyl operator means measuring its spectral projectors and using
its eigenvalues as outcome labels.  Maximal joint Weyl measurements are
described by Lagrangian subspaces.

\begin{definition}[Lagrangian subspace]
An $r$-dimensional subspace $L\subset V_r$ is Lagrangian if
$[\ell,\ell']=0$ for every $\ell,\ell'\in L$.  The corresponding operators
$\{T_r(\ell):\ell\in L\}$ form a maximal commuting Weyl family
\cite{Gross2006}.
\end{definition}

Choose a basis $\ell_1,\ldots,\ell_r$ of $L$.  A joint outcome
assigns eigenvalues $\omega^{s_j}$ to the commuting operators
$T_r(\ell_j)$, with $s_j\in\Z_d$; the Weyl product rule then fixes the
eigenvalue of every $T_r(\ell)$ with $\ell\in L$.

We can represent these joint eigenvalues by a point in symplectic phase
space.  Choose dual vectors $h_1,\ldots,h_r$ with $[h_i,\ell_j]=\delta_{ij}$ and set
$a=\sum_i s_i h_i$.  Then $[a,\ell_j]=s_j$, and the eigenvalue of every
$T_r(\ell)$ in the commuting family is $\omega^{[a,\ell]}$.
Nondegeneracy guarantees that such dual vectors exist.  Changing the dual
vectors while retaining these pairings changes $a$ only by a vector in $L$:
the subspace pairing to zero with all of $L$ is $L$ itself.
Consequently, all points in
\begin{equation*}
 C=a+L=\{a+\ell:\ell\in L\}
\end{equation*}
label the same complete joint outcome, and points in different cosets label
different outcomes.  We therefore use cosets to label joint outcomes: each
coset specifies the eigenvalues of the entire commuting family without
choosing a basis or a particular phase-space representative.

There are $D_r$ distinct cosets of $L$, corresponding to the $D_r$ rank-one
outcomes of this maximal joint measurement.  The projector for outcome
$C=a+L$ is
\begin{equation}
 \Pi_C=\frac{1}{D_r}\sum_{\ell\in L}
   \omega^{-[a,\ell]}T_r(\ell).
 \label{eq:stabilizer-projector}
\end{equation}
Indeed, for every $\ell\in L$,
$T_r(\ell)\Pi_C=\omega^{[a,\ell]}\Pi_C$, which recovers the specified
joint eigenvalues \cite{HostensEtAl2005,Gross2006}.

The Gross Wigner representation assigns a quasiprobability to each point of
$V_r$.  For a state $\tau$ on the $r$ qudits and a point $u\in V_r$, define
the phase-point operator $A_u$ and the Wigner value $W_\tau(u)$ by
\begin{equation}
 \begin{aligned}
  A_u&=\frac{1}{D_r}\sum_{v\in V_r}\omega^{-[u,v]}T_r(v),\\
  W_\tau(u)&=\frac{1}{D_r}\Tr(A_u\tau).
 \end{aligned}
 \label{eq:gross-definitions}
\end{equation}
The values $W_\tau(u)$ are real and sum to one, but individual values can be
negative \cite{Gross2006}.  The phase-point operators satisfy the algebraic
identities
\begin{equation}
 \begin{aligned}
  \Tr(A_u)&=1,
  &A_u^2&=I,\\
  \Tr(A_u A_x)&=D_r\,\delta_{u,x}.
 \end{aligned}
 \label{eq:phase-point-identities}
\end{equation}
These follow from the Weyl expansion in Eq.~\eqref{eq:gross-definitions}
\cite{Gross2006}.

Measurement probabilities are recovered by summing the Wigner values over
the cosets that label the outcomes:
\begin{subequations}
\label{eq:phase-space-identities}
\begin{equation}
 \Tr(\Pi_C\tau)=\sum_{x\in C}W_\tau(x).
 \label{eq:coset-probability}
\end{equation}
To see the normalization and phases explicitly, write $x=a+\ell$ in the
phase-point expansion.  The sum over the outcome coset is
\begin{equation*}
 \sum_{x\in a+L}A_x
 =\frac{1}{D_r}\sum_{v\in V_r}\omega^{-[a,v]}T_r(v)
       \sum_{\ell\in L}\omega^{-[\ell,v]}.
\end{equation*}
The inner character sum is $D_r$ when $v\in L$ and zero otherwise, since
$L$ is its own symplectic orthogonal complement.  Hence
\begin{equation*}
 \sum_{x\in a+L}A_x
 =\sum_{v\in L}\omega^{-[a,v]}T_r(v)=D_r\Pi_{a+L}.
\end{equation*}
Taking the expectation and dividing by $D_r$ gives
Eq.~\eqref{eq:coset-probability}
\cite{Gross2006}.

We can also sum outcome probabilities across different measurements by
counting how their cosets cover phase space.  We choose the measurements
so that the cosets through a fixed point $u$ cover the whole space and
overlap only at $u$; a symplectic spread provides this arrangement.

\begin{definition}[Symplectic spread]
A symplectic spread $\mathcal L$ of $V_r$ is a collection of $D_r+1$
Lagrangian subspaces whose nonzero vectors partition $V_r\setminus\{0\}$.
\end{definition}

Every nonidentity Weyl operator therefore belongs to exactly one of the
corresponding maximal commuting families.  Such spreads exist in $V_r$ by
the standard finite-field construction \cite{GibbonsEtAl2004}.

Fix a point $u\in V_r$ and select the outcome $u+L$ from each measurement
in the spread.  When their projectors in Eq.~\eqref{eq:stabilizer-projector}
are summed, every nonzero Weyl label occurs once, while the identity term
occurs $D_r+1$ times.  Comparing with the Fourier expression for $A_u$ gives
\begin{equation}
 \sum_{L\in\mathcal L}\Pi_{u+L}=I+A_u.
 \label{eq:spread-projector-sum}
\end{equation}
\end{subequations}
For a state $\tau$, taking the expectation value gives a sum of outcome
probabilities equal to $1+D_r W_\tau(u)$.

For one qutrit, $V_1=\Z_3^2$ consists of nine points.  The four lines
through the origin form a spread:
\begin{equation}
 \mathcal L_3=\bigl\{\langle(1,0)\rangle,\langle(0,1)\rangle,
                 \langle(1,1)\rangle,\langle(1,2)\rangle\bigr\},
 \label{eq:qutrit-spread}
\end{equation}
where $\langle v\rangle=\{0,v,2v\}$.  Each line and its two parallel
cosets describe the three outcomes of one stabilizer measurement; the four
measurements have mutually unbiased eigenbases.  Selecting the outcome
through $u=0$ in each measurement selects all four lines in the spread.
They share the origin and partition the other eight points, so the sum of
their probabilities is $1+3W_\tau(0)$.

For a composite system, the phase-space representation respects the
tensor-product structure of the Hilbert space.  Consider $n=k+m$ qudits
split into nonempty subsystems $A$ and $B$, with $k,m\geq1$.  Write
$V_A\cong V_k$ and $V_B\cong V_m$, so that $V_n=V_A\oplus V_B$.  We use $T_A$ and
$T_B$ for the Weyl operators on the corresponding subsystems.  For
$v_A,u\in V_A$ and $v_B,v\in V_B$, the symbols $A_u$ and $A_v$ likewise denote
their respective phase-point operators.  If $\rho_A$ and $\rho_B$ are states
on the two subsystems, then
\begin{equation}
 \begin{aligned}
 T_n(v_A\oplus v_B)&=T_A(v_A)\otimes T_B(v_B),\\
 A_{u\oplus v}&=A_u\otimes A_v,\\
 W_{\rho_A\otimes\rho_B}(u,v)
   &=W_{\rho_A}(u)W_{\rho_B}(v).
 \end{aligned}
 \label{eq:tensor-factorization}
\end{equation}
Here $(u,v)$ is shorthand for the composite phase-space point $u\oplus v$.
The first identity follows directly from Eq.~\eqref{eq:weyl-definition}.
Factoring the Fourier sum in Eq.~\eqref{eq:gross-definitions} then gives
the phase-point identity, whose expectation in a product state gives the
Wigner factorization \cite{Gross2006}.

For a general composite state $\rho$, set $\sigma=\Tr_B\rho$.
Summing the phase-point operators over $B$ removes all nonidentity Weyl
terms in that factor.  The resulting identity is
\begin{equation}
 W_\sigma(u)=\sum_{v\in V_B}W_\rho(u,v).
 \label{eq:wigner-marginalization}
\end{equation}
This holds for correlated as well as product states \cite{Gross2006}.
Ignoring the measurement outcomes on $B$ therefore corresponds to summing
over its phase-space coordinates in the Wigner representation.

\subsection{Logical Bell inequalities}
\label{sec:logical-framework}

The logical Bell construction in Ref.~\cite{AbramskyHardy2012} bounds the
sum of event probabilities in a noncontextual model when no
context-independent outcome assignment can make all the events occur.  Each
event is testable within one measurement context, although the different
contexts need not be jointly measurable.

For a measurement context $\mathsf C$, a joint outcome $s$ specifies a result
$s(a)$ for every measurement $a\in\mathsf C$.  The atomic proposition $[a=s(a)]$
states that measurement $a$ gives the result $s(a)$.  A complete joint outcome
requires all these statements to hold, so it is represented by the
conjunction
\begin{equation}
 \phi_{\mathsf C,s}:=\bigwedge_{a\in\mathsf C}[a=s(a)],
 \label{eq:joint-outcome-proposition}
\end{equation}
where $\wedge$ denotes logical AND.  For a context consisting of measurements
$a$ and $b$, for example, the joint outcome states that $a$ gives $s(a)$ and
$b$ gives $s(b)$.

An event $E$ is a set of complete joint outcomes within one context.  It occurs when
the observed outcome is any one of those in $E$, so its proposition is the
logical OR of the corresponding joint-outcome statements.  If
$p_{\mathsf C}(s)$ is the probability of observing $s$ when context
$\mathsf C$ is measured, then
\begin{equation}
 \phi_{\mathsf C,E}:=\bigvee_{s\in E}\phi_{\mathsf C,s},
 \qquad
 p(\phi_{\mathsf C,E})=\sum_{s\in E}p_{\mathsf C}(s),
 \label{eq:event-proposition}
\end{equation}
where $\vee$ denotes logical OR.  The probabilities add because distinct
joint outcomes of one context are mutually exclusive.
An event that specifies only some measured values is represented by all
complete outcomes consistent with those values.  For example, fixing the
outcome on $A$ and leaving $B$ unrestricted means summing over every
compatible $B$ outcome in that same context.

A deterministic global assignment chooses one outcome for every measurement
in the scenario, with the same choice in every context containing that
measurement.  Consider $N$ event propositions $\phi_i$ that no global
assignment can satisfy simultaneously.  Every global assignment then makes
at least one proposition false and hence satisfies at most $N-1$.

A noncontextual model reproduces the outcome distributions of all contexts
using a single probability distribution over these global assignments
\cite{AbramskyBrandenburger2011}.  The probability of $\phi_i$ is the total
weight of assignments that make it true.  Summing the event probabilities
therefore gives the average number of satisfied propositions: an assignment
satisfying several propositions contributes once for each of them.  The
probability of their logical OR would instead count that assignment only
once.  The counting bound, rather than a union probability, gives the
logical Bell inequality.  Since every
assignment satisfies at most $N-1$, this average obeys
\begin{equation}
 \sum_{i=1}^{N}p(\phi_i)\leq N-1.
 \label{eq:general-logical-bell}
\end{equation}
More generally, the tight bound for this sum is the largest number of the
chosen propositions that one global assignment can satisfy.

Violation of this bound rules out a noncontextual model for the measured
probability table.  The construction applies both to spatial Bell
experiments and to general contextuality scenarios, where the contexts are
sets of compatible measurements \cite{AbramskyHardy2012}.

\subsection{Complete Weyl statistics and contextual fraction}
\label{sec:complete-model}

We consider $n\geq2$ qudits of odd-prime local dimension $d$.

\begin{definition}[Complete Weyl scenario]
The complete $n$-qudit Weyl measurement scenario contains every Weyl
observable $T_n(v)$, with $v\in V_n$, and every maximal commuting context.
A context is labelled by a Lagrangian subspace $K\subset V_n$ and consists
of the observables $\{T_n(v):v\in K\}$.  Its joint outcomes are the cosets
$C=a+K$, with projectors $\Pi_C$.  An observable appearing in several
contexts is the same measurement in each
\cite{Gross2006,DelfosseEtAl2017}.
\end{definition}

The state $\rho$ assigns to each full-system context the distribution
\begin{equation}
 p_\rho^K(C)=\Tr(\Pi_C\rho),\qquad C\in V_n/K.
 \label{eq:born-context-table}
\end{equation}
Its empirical model is the collection
$e_\rho=(p_\rho^K)_K$ over all maximal contexts.  An event $E$ in $K$ has
probability $p_\rho(E)=\sum_{C\in E}p_\rho^K(C)$.  The proposition
$\phi_{K,E}$ is true exactly when the event $E$ occurs, so its probability
is $p_\rho(\phi_{K,E})=p_\rho(E)$.  Thus $p_\rho$ refers to entries or sums of
entries of $e_\rho$, rather than to a separate distribution over contexts.
When the distribution is viewed on all tuples of Weyl eigenvalues in $K$,
tuples that do not form a valid coset outcome have probability zero.

A deterministic global assignment uses a single function
$\nu:V_n\to\Z_d$ to assign the eigenvalue $\omega^{\nu(v)}$ to each
observable $T_n(v)$ \cite{AbramskyBrandenburger2011}.
To give a valid joint outcome in every context $K$, its assigned values
must agree with one of that context's coset outcomes.

For a coset $C=a+K$, let $\phi_{K,C}$ state that all the assigned Weyl
eigenvalues match those of $C$.  The proposition $\phi_K$
includes all valid joint outcomes of $K$, so it states that one of them
occurs:
\begin{equation}
 \begin{aligned}
  \phi_{K,C}&:=\bigwedge_{v\in K}
    [T_n(v)=\omega^{[a,v]}],\\
  \phi_K&:=\bigvee_{C\in V_n/K}\phi_{K,C}.
 \end{aligned}
 \label{eq:probability-one-proposition}
\end{equation}
Each bracketed Weyl-value proposition is shared by every context containing
that observable.  The conditions $\phi_K$ therefore constrain the same
assignment across overlapping contexts.

A joint measurement of $K$ always returns one of its coset outcomes, so
$p(\phi_K)=1$.  In a noncontextual model, this probability is the total weight
of deterministic assignments satisfying $\phi_K$.  Since that weight is one,
no assignment violating $\phi_K$ can have positive weight.  Applying this
argument to every context shows that each assignment of nonzero weight
satisfies all the propositions $\phi_K$.

For each context $K$, truth of $\phi_K$ gives a coset $C=a+K$ with
$\nu(v)=[a,v]$ for every $v\in K$; the representative $a$ can depend on
$K$.  Linearity of the symplectic pairing in its second argument makes
$\nu$ additive within that context.  Every commuting pair lies in a maximal
context, so simultaneous truth of all $\phi_K$ implies
\begin{equation}
 \nu(v+w)=\nu(v)+\nu(w)
 \qquad\text{whenever }[v,w]=0.
 \label{eq:commuting-additivity}
\end{equation}

For commuting $v,w$, Eq.~\eqref{eq:weyl-product} identifies $T_n(v+w)$
with the product $T_n(v)T_n(w)$.  Its eigenvalue can therefore be obtained
either by measuring $T_n(v+w)$ directly or by jointly measuring the two
factors and multiplying their outcomes.  These procedures have the same
outcome projectors.  Implementation noncontextuality requires their
conditional outcome probabilities to agree for each hidden-variable value.
In a deterministic model, this requires equality of the direct and product
outcomes.  In Ref.~\cite[Proposition~3]{DelfosseEtAl2017}, the authors derive
the commuting-pair additivity condition from this operational requirement.

For odd $d$ and $n\geq2$, it is shown in Ref.~\cite[Lemma~1]{DelfosseEtAl2017}
that every such assignment has the form
\begin{equation}
 \nu_x(v)=[x,v]
 \qquad (x,v\in V_n).
 \label{eq:deterministic-values}
\end{equation}
The same phase-space point $x$ now determines the joint outcome $x+K$ in
every context $K$.  Noncontextual models of the complete scenario are therefore
probability mixtures of these assignments.

A state $\rho$ with a nonnegative Gross Wigner function supplies such a
probability distribution explicitly.  Sample $x$ with probability
$W_\rho(x)$ and return the coset $x+K$ in context $K$.
Equation~\eqref{eq:coset-probability} gives the probability of any outcome
$C$ as
$p_\rho(C)=\sum_{x\in C}W_\rho(x)=\Tr(\Pi_C\rho)$.
This reproduces every maximal context in the complete Weyl scenario, and
nonmaximal commuting measurements follow by coarse-graining
\cite{Gross2006,DelfosseEtAl2017}.

We can interpret a noncontextual model in terms of repeated experimental
runs.  On each run, we sample a global assignment and report the outcomes
it specifies for the measured context.  The assignment may vary from run
to run, but its sampling distribution is independent of which context is
measured.  For a contextual probability table, we ask for the largest
fraction of runs that can be generated in this way while reproducing the
outcome probabilities in every context.

We separate the statistics into a noncontextual contribution and a
remainder on the same measurement scenario:
\begin{equation}
 e=(1-\lambda)e_{\mathrm{NC}}+\lambda e',
 \qquad 0\leq\lambda\leq1,
 \label{eq:cf-decomposition}
\end{equation}
Here $e_{\mathrm{NC}}$ is noncontextual, and $e'$ is a compatible probability
model: it assigns a normalized outcome distribution to each context, with
agreeing marginals wherever contexts overlap.  In a statistical simulation,
a fraction $1-\lambda$ of runs uses the sampled global assignments, and the
remaining fraction $\lambda$ uses $e'$.  The same mixing weight applies in
every context, and the combined statistics reproduce every entry of $e$.

The contextual fraction $\CF(e)$ is the minimum possible value of
$\lambda$ \cite{AbramskyBarbosaMansfield2017}.  It is therefore the smallest
fraction of runs that must be supplied by the remainder after accounting
for as much of the statistics as possible with a mixture of global
assignments.  The optimization ranges over all noncontextual probability
models and compatible remainders on the same measurement scenario, without
requiring quantum-state realizations of the two components.  For the
complete Weyl statistics $e_\rho$, a nonnegative Wigner function supplies a
fully noncontextual model, and the converse follows from the phase-space
characterization above.  Hence $\CF(e_\rho)=0$ exactly when $W_\rho$ is
everywhere nonnegative \cite{DelfosseEtAl2017}.

\section{Marginal logical Bell inequalities}
\label{sec:lbi}

\subsection{Logical Bell violations and marginal Wigner negativity}

We test a Wigner value of a proper subsystem using probabilities of joint
measurements on the composite system.  Split $n=k+m$ qudits, with $k,m\geq1$, placing
$k$ qudits in the chosen subsystem $A$ and $m$ in its complement $B$.
Their phase spaces are
$V_A\cong\Z_d^{2k}$ and $V_B\cong\Z_d^{2m}$, with Hilbert-space dimensions
\begin{equation}
 D_A=d^k,
 \qquad
 D_B=d^m.
 \label{eq:block-dimensions}
\end{equation}
To test the marginal Wigner value at $u\in V_A$, choose a symplectic spread
$\mathcal L$ of $V_A$ and select the outcome $u+L$ from each context $L$.
These target cosets cover $V_A$, so no phase-space point can avoid all the
selected outcomes.  On $B$, fix a Lagrangian subspace $S\subset V_B$,
corresponding to one maximal commuting measurement.  Combining each $A$
measurement with this same $B$ measurement gives the joint maximal contexts
\begin{equation}
 K_L=L\oplus S.
\end{equation}
Let $E_L$ be the event that the $A$ outcome is $u+L$, irrespective of the
$B$ outcome.  Its probability is obtained by summing over all $B$ outcomes.
For an arbitrary composite state $\rho$ with
$\sigma=\Tr_B\rho$,
\begin{equation}
 \begin{aligned}
  p_\rho(E_L)
  &=\sum_{C_B\in V_B/S}
    \Tr\!\left[\rho(\Pi_{u+L}\otimes\Pi_{C_B})\right]\\
  &=\Tr(\Pi_{u+L}\sigma).
 \end{aligned}
 \label{eq:coarse-graining}
\end{equation}
The $B$-outcome projectors sum to $I_B$, leaving a probability fixed by the
reduced state $\sigma$.

In each context $K_L$, the target event $E_L$ and its complement
$E_L^{\mathrm c}$ partition the valid joint outcomes.  The complement
contains exactly those outcomes whose $A$ coset differs from $u+L$; the
$B$ outcome remains unrestricted.  A fine-grained outcome of $K_L$ is a pair of cosets
$(C_A,C_B)\in V_A/L\times V_B/S$.  Let
$\phi_{K_L,(C_A,C_B)}$ denote the conjunction in
Eq.~\eqref{eq:joint-outcome-proposition} that specifies its Weyl eigenvalues.
Let $\phi_L^{(u)}$ state that, in context $K_L$, the $A$ outcome is a coset
other than $u+L$, with any $B$ outcome allowed:
\begin{equation}
 \phi_L^{(u)}=
 \bigvee_{\substack{C_A\in V_A/L\\ C_A\neq u+L}}
 \;\bigvee_{C_B\in V_B/S}\phi_{K_L,(C_A,C_B)},
 \label{eq:spread-proposition}
\end{equation}
Thus $\phi_L^{(u)}$ expresses $E_L^{\mathrm c}$ and
$p(\phi_L^{(u)})=1-p(E_L)$.  In contrast, $\phi_K$ expresses that the eigenvalue
assignment is a valid outcome of $K$.

Let $K_1,\ldots,K_R$ be the remaining maximal contexts in the complete Weyl
scenario.  For each $K_j$, include the valid-outcome proposition
$\phi_{K_j}$ from Eq.~\eqref{eq:probability-one-proposition},
with $p(\phi_{K_j})=1$.
The full family consists of the $D_A+1$ propositions
$\phi_L^{(u)}$ and the $R$ propositions $\phi_{K_j}$.
These formulas use the same Weyl-value propositions on every overlap between
contexts.

For example, take $A$ to be one qutrit and choose the origin $u=(0,0)$.
Its phase space consists of nine points $(p,q)\in\Z_3^2$, with
coordinate arithmetic modulo three.  We choose the horizontal line
$L_1=\{(0,0),(1,0),(2,0)\}$.  This is a Lagrangian subspace: its vectors have
zero symplectic product with one another, and no larger subspace of
$V_A$ has this property.  It specifies a commuting Weyl measurement.
The three parallel lines $L_1$, $(0,1)+L_1$, and $(0,2)+L_1$ label its
three possible outcomes, partitioning the nine points as in
Fig.~\ref{fig:measurement-setup}(a).

We obtain an outcome probability by summing the marginal Wigner values
along its line, as in Eq.~\eqref{eq:coset-probability}.  This is the
discrete counterpart of integrating the continuous-variable Wigner
function along a line to obtain a quadrature probability density.
For the solid line through the origin in Fig.~\ref{fig:measurement-setup}(a),
\begin{equation*}
 p_\rho(E_{L_1})=W_\sigma(2,0)+W_\sigma(0,0)+W_\sigma(1,0).
\end{equation*}
Here $\sigma=\Tr_B\rho$: summing over the joint measurement's $B$
outcomes leaves precisely this line sum on $A$.  The proposition
$\phi_{L_1}^{(0)}$ accepts either of the two dashed-line outcomes and
excludes the solid-line outcome, so its probability is $1-p_\rho(E_{L_1})$.

The complete-scenario validity conditions let us label each deterministic
assignment by a phase-space point $x=(x_A,x_B)$, as in
Eq.~\eqref{eq:deterministic-values}.  Its outcome in the horizontal
measurement is the unique coset $x_A+L_1$ containing $x_A$.
For example, $x_A=(1,1)$ selects the upper dashed line in
Fig.~\ref{fig:measurement-setup}(a) and satisfies $\phi_{L_1}^{(0)}$.
The same point determines the outcome $x_A+L_i$ in every other
measurement.  To satisfy all four exclusion propositions, it must
therefore lie outside every target line $L_i$.  We leave $x_B$ free,
since the propositions accept every $B$ outcome.

We choose the vertical line $L_2$ and the two diagonal lines $L_3,L_4$
from Eq.~\eqref{eq:qutrit-spread} for the remaining measurements.
The four target lines cover the grid in
Fig.~\ref{fig:measurement-setup}(b), so no point $x_A$ avoids them all.
For instance, the point $(1,1)$ lies on $L_3$ and thus fails
$\phi_{L_3}^{(0)}$, although it satisfies the first exclusion.
Every noncentral point lies on exactly one target line and satisfies
the other three exclusion propositions.  Averaging over assignments
therefore bounds the sum of exclusion probabilities by three, or
equivalently gives a lower bound of one on the sum of the four
target-event probabilities.

The same lines also give the quantum probability sum.  Adding their
Wigner sums counts each noncentral point once and the origin four
times.  The sum over all nine points is one, leaving three additional
copies of $W_\sigma(0)$:
\begin{equation}
 \sum_{L\in\mathcal L_3}p_\rho(E_L)=1+3W_\sigma(0),
 \label{eq:qutrit-test}
\end{equation}
so a negative value at the origin lowers this sum below its
noncontextual bound.

\begin{figure}[t]
 \centering
 \input{figures/measurement-setup.tex}
 \caption{Qutrit measurement outcomes and target-line coverage at
 $u=(0,0)$, with coordinates modulo three.
 (a) The cosets of $L_1$ label one measurement's outcomes.  The solid
 target line is excluded by $\phi_{L_1}^{(0)}$; the two dashed lines
 are accepted.  The circled point $x_A=(1,1)$ selects the upper outcome.
 (b) The four target lines cover phase space, meeting only at $u$.
 The same circled point lies on $L_3$ and fails its exclusion.
 The multiplicities give the Wigner probability sum in
 Eq.~\eqref{eq:qutrit-test}.}
 \label{fig:measurement-setup}
\end{figure}

\begin{theorem}
\label{thm:marginal-lbi}
Let $d$ be an odd prime and $n=k+m$ with $k,m\geq1$.  For a symplectic spread
$\mathcal L$ of $V_A$, a point $u\in V_A$, and a Lagrangian subspace
$S\subset V_B$, define the contexts, events, and propositions as above.
The family of $D_A+1+R$ propositions $\phi_L^{(u)}$ and
$\phi_{K_j}$ is jointly unsatisfiable.  The associated logical
Bell inequality has a noncontextual
bound attained by a deterministic assignment of the complete $n$-qudit Weyl
scenario.  Using $p(\phi_{K_j})=1$, this inequality reduces to
\begin{equation}
 \sum_{L\in\mathcal L}p(E_L)\geq1.
 \label{eq:lbi}
\end{equation}
For every composite state $\rho$ with $A$-marginal
$\sigma=\Tr_B\rho$,
\begin{equation}
 \sum_{L\in\mathcal L}p_\rho(E_L)
 =1+D_A W_\sigma(u).
 \label{eq:probability-sum}
\end{equation}
The inequality is violated if and only if $W_\sigma(u)<0$.
\end{theorem}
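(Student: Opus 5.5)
The plan has two parts: a combinatorial bound on deterministic assignments of the complete scenario, and a quantum evaluation of the same probability sum.

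\emph{Unsatisfiability and the noncontextual bound.} Let $\nu$ be a global assignment satisfying all $R$ validity propositions $\phi_{K_j}$. If the remaining contexts $K_j$ do not already include enough of them to force commuting-pair additivity, I also use that $\phi_L^{(u)}$ implies a valid outcome of $K_L$, so $\phi_{K_L}$ holds whenever $\phi_L^{(u)}$ does. Then every context has a valid outcome, and Eq.~\eqref{eq:commuting-additivity} follows as in Sec.~\ref{sec:complete-model}. By Ref.~\cite[Lemma~1]{DelfosseEtAl2017}, $\nu=\nu_x$ for some $x=(x_A,x_B)\in V_n$, as in Eq.~\eqref{eq:deterministic-values}. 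The outcome of $\nu_x$ in $K_L$ is the coset $x+K_L$, whose $A$-component is $x_A+L$. Hence $\phi_L^{(u)}$ holds exactly when $x_A\notin u+L$, that is, when $x_A-u\notin L$. Since the spread $\mathcal L$ covers $V_A$, some $L$ contains $x_A-u$. So no assignment satisfies the whole family, and this gives unsatisfiability.

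The tight bound is the maximal number of satisfied propositions. For $x_A\neq u$, the nonzero vector $x_A-u$ lies in exactly one $L\in\mathcal L$, because the spread partitions nonzero vectors. Such an $x$ satisfies all $R$ validity propositions and $D_A$ of the exclusions, for a total of $D_A+R$. Assignments that violate some $\phi_{K_j}$ have weight zero in any noncontextual model and do not affect the bound. Alternatively, I would count them directly: a noncontextual model is a mixture of the $\nu_x$, so I restrict to those. The bound $D_A+R$ is then attained by $\nu_x$ with $x_A\neq u$. Inserting $p(\phi_{K_j})=1$ and $p(\phi_L^{(u)})=1-p(E_L)$ gives $(D_A+1)-\sum_L p(E_L)\le D_A$, which is Eq.~\eqref{eq:lbi}.

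\emph{Quantum value.} By Eq.~\eqref{eq:coarse-graining}, $p_\rho(E_L)=\Tr(\Pi_{u+L}\sigma)$. Summing over $\mathcal L$ and applying Eq.~\eqref{eq:spread-projector-sum} on $V_A$ gives $\Tr[(I+A_u)\sigma]=1+D_AW_\sigma(u)$, by Eq.~\eqref{eq:gross-definitions}. Since $D_A>0$, the sum falls below $1$ exactly when $W_\sigma(u)<0$.

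\emph{Main obstacle.} The main subtlety is the first step: justifying that only assignments of the form $\nu_x$ need to be counted. An arbitrary $\nu$ could satisfy all exclusions while violating a validity proposition. It would then satisfy at most $D_A+1+R-1$ propositions, which is the same count, so the bound is unaffected either way. I would state this explicitly so that the general bound in Eq.~\eqref{eq:general-logical-bell} and the tightness claim agree.
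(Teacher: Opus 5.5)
Your proposal is correct and follows the paper's proof essentially step by step. You use that each $\phi_L^{(u)}$ implies $\phi_{K_L}$ to reduce to phase-space assignments $\nu_x$ via Ref.~\cite[Lemma~1]{DelfosseEtAl2017}, the spread's covering of $V_A$ for unsatisfiability, its partition of nonzero vectors for tightness at $x_A\neq u$, and Eqs.~\eqref{eq:coarse-graining} and \eqref{eq:spread-projector-sum} for the quantum value; your extra remark about assignments violating some $\phi_{K_j}$ is simply the general counting bound \eqref{eq:general-logical-bell}, which the paper invokes directly.
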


\begin{proof}
Suppose a global truth assignment satisfied every $\phi_L^{(u)}$
and every $\phi_{K_j}$.  Each $\phi_L^{(u)}$ is a
disjunction of valid outcomes of $K_L$, so it also implies
$\phi_{K_L}$.  The
assignment would satisfy the probability-one proposition for every maximal
Weyl context and hence, by Eq.~\eqref{eq:deterministic-values}, would be
labelled by a point $x=(x_A,x_B)$.  Its $A$ outcome in $K_L$ would be
$x_A+L$.  Truth of $\phi_L^{(u)}$ would then require $x_A+L\neq u+L$, or
equivalently $x_A-u\notin L$, for every $L\in\mathcal L$.  This is
impossible because every vector of $V_A$ belongs to a member of the spread.
The family is therefore jointly unsatisfiable.

Applying Eq.~\eqref{eq:general-logical-bell} to these $D_A+1+R$ propositions
gives
\begin{equation}
 \sum_{L\in\mathcal L}p(\phi_L^{(u)})
 +\sum_{j=1}^{R}p(\phi_{K_j})\leq D_A+R.
 \label{eq:full-logical-bell}
\end{equation}
Since $p(\phi_{K_j})=1$, the probability-one terms cancel:
\begin{equation}
 \sum_{L\in\mathcal L}p(\phi_L^{(u)})\leq D_A.
 \label{eq:spread-logical-bell}
\end{equation}
Using $p(\phi_L^{(u)})=1-p(E_L)$ proves Eq.~\eqref{eq:lbi}.

To establish tightness, choose a complete-scenario assignment labelled by
$x=(x_A,x_B)$ with $x_A\neq u$.  The nonzero vector $x_A-u$ lies in exactly
one member of the spread, so exactly one target event $E_L$ occurs and
exactly $D_A$ of the propositions $\phi_L^{(u)}$ are true.  Every
$\phi_{K_j}$ is also true.  This assignment saturates both the
full logical Bell inequality
and the reduced bound.

For a quantum model, Eq.~\eqref{eq:coarse-graining} reduces the selected-event
sum to the expectation of the spread projectors on $A$.  Their sum is
$I+A_u$ by Eq.~\eqref{eq:phase-space-identities}, so
\begin{equation*}
 \begin{aligned}
  \sum_{L\in\mathcal L}p_\rho(E_L)
  &=\Tr\!\left[(I+A_u)\sigma\right]\\
  &=1+\Tr(A_u\sigma)=1+D_A W_\sigma(u).
 \end{aligned}
\end{equation*}
The second line uses normalization and Eq.~\eqref{eq:gross-definitions}.
Comparing this value with Eq.~\eqref{eq:lbi} proves the stated violation
criterion.
\end{proof}

The violation also bounds the contextual weight in the full probability
table $e_\rho$, as defined in Section~\ref{sec:complete-model}.

\begin{corollary}
\label{cor:marginal-cf-bound}
Under the hypotheses of Theorem~\ref{thm:marginal-lbi}, every composite state
$\rho$ with $A$-marginal $\sigma$ satisfies
\begin{equation}
 \CF(e_\rho)\geq[-\Tr(A_u\sigma)]_+
 =[-D_A W_\sigma(u)]_+,
 \label{eq:cf-bound}
\end{equation}
where $[t]_+=\max\{t,0\}$.
\end{corollary}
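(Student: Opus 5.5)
We derive the bound from the standard fact that a logical Bell violation, normalized by the algebraic range of the inequality, lower-bounds the contextual fraction. Take any decomposition $e_\rho=(1-\lambda)e_{\mathrm{NC}}+\lambda e'$ as in Eq.~\eqref{eq:cf-decomposition}. The quantity we track is the selected-event sum $F(e)=\sum_{L\in\mathcal L}p_e(E_L)$.

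For a given context, $p_e(E_L)$ is a sum of entries of that context's distribution. Hence $F$ is affine in the empirical model, and
\begin{equation*}
 F(e_\rho)=(1-\lambda)F(e_{\mathrm{NC}})+\lambda F(e').
\end{equation*}
We then bound each term.

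For the noncontextual part, we use Theorem~\ref{thm:marginal-lbi}. The inequality \eqref{eq:lbi} was derived for any probability mixture of complete-scenario global assignments. Such a mixture is exactly a noncontextual model of the complete Weyl scenario, so $F(e_{\mathrm{NC}})\geq1$. For the remainder $e'$ we use only that each $p_{e'}(E_L)$ is a probability, so $F(e')\geq0$. This trivial bound is what gives the normalization: the gap between the noncontextual bound $1$ and the algebraic minimum $0$ equals $1$.

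Combining the two bounds gives
\begin{equation*}
 F(e_\rho)\geq 1-\lambda,
\end{equation*}
so $\lambda\geq 1-F(e_\rho)$. Equation~\eqref{eq:probability-sum} gives $F(e_\rho)=1+\Tr(A_u\sigma)=1+D_AW_\sigma(u)$, hence $\lambda\geq-\Tr(A_u\sigma)$. Since $\lambda\geq0$ always, we get $\lambda\geq[-\Tr(A_u\sigma)]_+$. Minimizing over all decompositions gives Eq.~\eqref{eq:cf-bound}. The equality of the two forms of the bound is immediate from Eq.~\eqref{eq:gross-definitions}.

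The only step needing care is the claim that Eq.~\eqref{eq:lbi} holds for every noncontextual model $e_{\mathrm{NC}}$ in the decomposition, and not only for quantum ones. We check it as follows. $e_{\mathrm{NC}}$ is, by definition, a mixture of deterministic global assignments that yield valid outcomes in every context, so each assignment in the support satisfies every $\phi_K$. By the argument in Section~\ref{sec:complete-model} and Eq.~\eqref{eq:deterministic-values}, each such assignment is labelled by some point $x$. The unsatisfiability in the proof of Theorem~\ref{thm:marginal-lbi} shows that $x_A$ lies in at least one target coset $u+L$, so each assignment contributes at least $1$ to $F$, and averaging preserves this. No quantum realizability of $e_{\mathrm{NC}}$ or $e'$ is used, consistent with the definition of $\CF$. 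Because $F$ depends on $\rho$ only through $\sigma$, by Eq.~\eqref{eq:coarse-graining}, the bound holds for every joint state with marginal $\sigma$.
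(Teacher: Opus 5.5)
Your proposal is correct and uses essentially the paper's argument: the convex-decomposition, normalized-violation bound, with the noncontextual bound applied to $e_{\mathrm{NC}}$ and the trivial algebraic bound applied to $e'$. The only difference is bookkeeping. The paper bounds the full sum of all $D_A+1+R$ propositions from above, by $D_A+R$ for $e_{\mathrm{NC}}$ and $D_A+1+R$ for $e'$, and uses $p_\rho(\phi_{K_j})=1$ only for the quantum model. You instead work directly with the reduced target-event sum, so you must check that every assignment supporting $e_{\mathrm{NC}}$ is a phase-space assignment $\nu_x$, which you correctly do.
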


\begin{proof}
We use the convex-decomposition argument underlying the normalized-violation
bound in Ref.~\cite[Theorem~1(i)]{AbramskyBarbosaMansfield2017}.
Consider any decomposition
$e_\rho=(1-\lambda)e_{\mathrm{NC}}+\lambda e'$ as defined above.
For $e_{\mathrm{NC}}$, the full probability sum in
Eq.~\eqref{eq:full-logical-bell} is at most $D_A+R$.
For $e'$, it is at most $D_A+1+R$, since each of the $D_A+1+R$
probabilities is at most one.  Probabilities are linear under this
decomposition, so
\begin{equation*}
 \begin{aligned}
  &\sum_{L\in\mathcal L}p_\rho(\phi_L^{(u)})
   +\sum_{j=1}^{R}p_\rho(\phi_{K_j})\\
  &\quad\leq(1-\lambda)(D_A+R)+\lambda(D_A+1+R)\\
  &\quad=D_A+R+\lambda.
 \end{aligned}
\end{equation*}
For the quantum model, $p_\rho(\phi_{K_j})=1$ and
$p_\rho(\phi_L^{(u)})=1-p_\rho(E_L)$, so the same probability sum equals
\begin{equation*}
 D_A+1+R-\sum_{L\in\mathcal L}p_\rho(E_L).
\end{equation*}
Comparing these expressions and using Eq.~\eqref{eq:probability-sum} yields
\begin{equation*}
 \lambda\geq1-\sum_{L\in\mathcal L}p_\rho(E_L)
 =-D_A W_\sigma(u).
\end{equation*}
This bound holds for every admissible decomposition of $e_\rho$.
Minimizing $\lambda$ and using $\lambda\geq0$ proves Eq.~\eqref{eq:cf-bound}.
\end{proof}

For a qutrit marginal with $W_\sigma(0)=-1/3$, Eq.~\eqref{eq:qutrit-test}
gives a target-event probability sum of zero.  The target-event
probabilities are nonnegative, so each target event has
probability zero and each $\phi_L^{(u)}$ has probability
one.  The four complementary-event probabilities then sum to four,
exceeding their complete-scenario bound of three.  Corollary~\ref{cor:marginal-cf-bound}
gives $\CF(e_\rho)=1$ for any joint state with this marginal.

\Needspace{8\baselineskip}
\subsection{Choice of subsystem}
\label{sec:subsystem-choice}

Each nonempty proper subsystem gives a family of lower bounds.  For a
subsystem $A$, denote the strongest of its pointwise bounds by
\begin{equation}
 b_A(\rho)=\max_{u\in V_A}[-D_AW_{\rho_A}(u)]_+,
 \qquad \rho_A=\Tr_{\bar A}\rho.
 \label{eq:optimized-marginal-bound}
\end{equation}
Here $\bar A$ is the complementary subsystem.  If several reduced states
are available, they give the combined bound
\begin{equation}
 \CF(e_\rho)\geq\max_{A\in\mathcal A}b_A(\rho),
 \label{eq:multiple-marginal-bound}
\end{equation}
where $\mathcal A$ is the chosen collection of nonempty proper subsystems.

Keeping more qudits need not give a larger value of $b_A$.  For nested
subsystems $C\subset A=C\cup R$, with $C$ and $R$ disjoint, marginalization
and $D_A=D_CD_R$ give
\begin{equation}
 -D_CW_{\rho_C}(u)
 =\frac{1}{D_R}\sum_{v\in V_R}
        [-D_AW_{\rho_A}(u,v)].
 \label{eq:nested-marginal-bound}
\end{equation}
This sum has $D_R^2$ terms and can contain both signs.  For example,
if $\rho_A=\rho_C\otimes I_R/D_R$, every point of $V_R$ has Wigner value
$1/D_R^2$, and
\begin{equation}
 b_A(\rho)=\frac{b_C(\rho)}{D_R}.
 \label{eq:mixed-extension-bound}
\end{equation}
For a qutrit state $\sigma$ with $W_\sigma(0)=-1/3$, the three-qutrit
state $\sigma\otimes I_2/3\otimes I_3/3$, with $I_j$ the identity on
qudit $j$, therefore has strongest one-qutrit bound one and strongest
two-qutrit bound $1/3$.

The reverse behavior also occurs.  Take the two-qutrit state
$\rho_{\mathrm{corr}}(0)$ constructed in
Appendix~\ref{app:fixed-marginals} and append a maximally mixed third qutrit.
Each one-qutrit marginal has a nonnegative Wigner function, so all
one-qutrit bounds vanish.  The first two qutrits have a joint parity
expectation of $-1$, giving $b_{\{1,2\}}=1$.  Thus even after optimizing
over subsystems of a fixed size, the bound has no monotone ordering with
that size.  Choosing the best available marginal witness requires comparing
the dimension-scaled negative values, rather than always retaining the
largest proper subsystem.
A larger reduced density operator also determines all its smaller
marginals.  Access to it therefore allows their witnesses to be evaluated
as well; Eq.~\eqref{eq:multiple-marginal-bound} retains the strongest of
these available bounds.

\subsection{Selected contexts and complete-scenario extensions}
\label{sec:scenarios}

Every selected context combines a measurement on $A$ with the same
measurement $S$ on $B$.  Distinct spread members share only the zero Weyl
label, so their $A$ measurements share no nonidentity Weyl observable.
The selected contexts therefore overlap precisely on the fixed $B$
measurement:
\begin{equation}
 K_L\cap K_{L'}=S
 \qquad (L\neq L').
 \label{eq:context-intersection}
\end{equation}
A deterministic assignment to this family must choose one common $B$ outcome
$C_B\in V_B/S$, while its $A$ outcome $C_L\in V_A/L$ can be chosen freely
for each $L$.

\begin{proposition}
\label{thm:selected-complete}
Let $d$ be an odd prime and $n=k+m$ with $k,m\geq1$.  Fix a symplectic
spread $\mathcal L$ of $V_A$ and a Lagrangian subspace $S\subset V_B$, and
set $K_L=L\oplus S$.
\begin{enumerate}
 \item Every family of probability distributions on the selected contexts
 whose marginals on the common $B$ measurement agree admits a noncontextual
 model on those contexts.
 \item A deterministic assignment $\bigl(C_B,(C_L)_L\bigr)$ to the selected
 contexts extends to the complete Weyl scenario if and only if all its
 $A$-outcome cosets contain a common point $x_A\in V_A$, equivalently,
 \begin{equation}
  C_L=x_A+L
  \qquad\text{for every }L\in\mathcal L.
  \label{eq:common-point}
 \end{equation}
 The common $B$-coset $C_B$ is unrestricted.
\end{enumerate}
\end{proposition}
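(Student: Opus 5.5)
For part 1, I will exploit the star-shaped overlap structure in Eq.~\eqref{eq:context-intersection}. Every pair of selected contexts meets exactly in $S$, so the compatibility hypothesis says that all the distributions $p^{K_L}$ have the same marginal $q$ on $V_B/S$. I will glue them by conditional independence. Define a distribution on deterministic assignments $\bigl(C_B,(C_L)_L\bigr)$ by
\[
 P\bigl(C_B,(C_L)_L\bigr)=q(C_B)\prod_{L\in\mathcal L}p^{K_L}(C_L\mid C_B),
\]
with any fixed conditional chosen when $q(C_B)=0$. Summing over all $C_{L'}$ with $L'\neq L$ gives $q(C_B)p^{K_L}(C_L\mid C_B)=p^{K_L}(C_L,C_B)$. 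So $P$ reproduces every selected context and is a noncontextual model on them.

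One point needs a remark: every pair $(C_L,C_B)\in V_A/L\times V_B/S$ is a valid coset outcome of $K_L=L\oplus S$, because $V_n/K_L\cong V_A/L\times V_B/S$. Hence these tuples are legitimate deterministic assignments to the selected contexts. No constraint other than the shared $S$-observables links the contexts. The reason is that distinct spread members share only the zero vector, so the only common observables of the contexts are the $T_n(0\oplus s)$ with $s\in S$.

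For part 2, the forward direction uses the classification Eq.~\eqref{eq:deterministic-values}, that is, \cite[Lemma~1]{DelfosseEtAl2017} as summarized in Section~\ref{sec:complete-model}. Any deterministic assignment of the complete scenario has the form $\nu_x$ for some $x=(x_A,x_B)$. In context $K_L$ it selects the coset $x+K_L=(x_A+L)\oplus(x_B+S)$. Its restriction to the selected contexts therefore has $C_L=x_A+L$ for every $L$ and $C_B=x_B+S$, which gives Eq.~\eqref{eq:common-point}.

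For the converse, take $x_A$ as in Eq.~\eqref{eq:common-point} and pick any $x_B\in C_B$. I claim the assignment $\nu_x$ with $x=(x_A,x_B)$ is a valid complete-scenario assignment and restricts to the given one. Validity holds because $\nu_x(v)=[x,v]$ is linear, hence additive on commuting pairs, and it selects the coset $x+K$ in each $K$. Restriction holds by the same coset computation as in the forward direction. Because $x_B$ ranges over all of $V_B$ as $C_B$ varies, $C_B$ is unrestricted.

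The equivalence in the statement, "all $C_L$ contain a common point" versus Eq.~\eqref{eq:common-point}, is immediate: $x_A\in C_L$ holds exactly when $C_L=x_A+L$. The main obstacle is making sure that "extends" is read through the complete-scenario validity conditions, so that Eq.~\eqref{eq:deterministic-values} applies. That lemma requires $n\geq2$, which holds here since $k,m\geq1$.
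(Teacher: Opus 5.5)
Your proposal is correct and follows essentially the same route as the paper. Part~1 glues the context distributions by making the $A$ outcomes conditionally independent given the common $B$ coset; your convention for $q(C_B)=0$ also gives those assignments zero weight. Part~2 uses the phase-space labelling $\nu_x$ of Eq.~\eqref{eq:deterministic-values} for necessity, and any representative $x_B\in C_B$ for sufficiency and for leaving $C_B$ unrestricted. Your explicit checks that every coset pair is a valid outcome of $K_L$ and that $\nu_x$ meets the validity conditions are details the paper leaves implicit.
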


\begin{proof}
Let $p_L(C_L,C_B)$ be the supplied joint-outcome distribution in context
$K_L$.  The common $B$ marginal is
\begin{equation}
 q(C_B)=\sum_{C_L\in V_A/L}p_L(C_L,C_B),
 \label{eq:common-b-marginal}
\end{equation}
which is independent of $L$ by compatibility.  To construct a noncontextual
model, first sample $C_B$ with probability $q(C_B)$.  For an outcome with
$q(C_B)>0$, the data fix a conditional distribution
$p_L(C_L,C_B)/q(C_B)$ for each choice of $L$.  A hidden-variable assignment
needs an $A$ outcome for every alternative setting, even though only one
setting is measured in a run.  The supplied tables leave the joint
distribution of these alternative outcomes unspecified.  Choose their
product distribution conditional on $C_B$.  Together with $C_B$, the
sampled cosets specify one deterministic assignment, with probability
\begin{equation}
 \mu\bigl(C_B,(C_L)_L\bigr)
 =q(C_B)\prod_{L\in\mathcal L}
   \frac{p_L(C_L,C_B)}{q(C_B)}.
 \label{eq:gluing}
\end{equation}
Set $\mu=0$ when $q(C_B)=0$.  The independent choices here are the outcomes
for different $A$ settings conditional on the common $B$ outcome.  Each
factor retains the supplied dependence of its $A$ outcome on $C_B$.

For fixed $C_B$ with positive weight, each conditional factor sums to one.
Summing $\mu$ over all $A$-cosets therefore leaves $q(C_B)$, and summing over
$C_B$ gives total weight one.  To recover the distribution in a particular
context $K_L$, fix $(C_L,C_B)$ with $q(C_B)>0$ and sum over the $A$-cosets
for all other contexts:
\begin{equation*}
 \begin{aligned}
  &\sum_{(C_{L'})_{L'\ne L}}
    \mu\bigl(C_B,(C_{L'})_{L'\in\mathcal L}\bigr)\\
  &\qquad=q(C_B)\frac{p_L(C_L,C_B)}{q(C_B)}
    =p_L(C_L,C_B).
 \end{aligned}
\end{equation*}
If $q(C_B)=0$, every $p_L(C_L,C_B)$ vanishes, so the zero-weight
prescription also recovers these probabilities.  Thus $\mu$ reproduces each
supplied joint distribution, including its correlations between $A$ and $B$,
and is a noncontextual model of the selected family.

By Eq.~\eqref{eq:deterministic-values}, every deterministic assignment of the
complete Weyl scenario is labelled by a phase-space point $x=(x_A,x_B)$.
Its outcomes in the selected contexts are $C_L=x_A+L$ and $C_B=x_B+S$, so
Eq.~\eqref{eq:common-point} is necessary.  Conversely, if that condition
holds, choose any representative $x_B$ of $C_B$.  The complete-scenario
assignment labelled by $(x_A,x_B)$ has all the required selected outcomes.
This proves sufficiency and leaves $C_B$ unrestricted.
\end{proof}

On the selected contexts alone, fix $C_B$ and choose $C_L\neq u+L$ for
every $L$.  This is an allowed deterministic assignment and makes every
target event $E_L$ false.  Since event probabilities are nonnegative, the
minimum of $\sum_{L\in\mathcal L}p(E_L)$ over selected-context deterministic
assignments is therefore zero.  Among assignments that extend to the complete
scenario, the minimum is one, as established in the proof of
Theorem~\ref{thm:marginal-lbi}.  The probability sum is linear in mixtures,
so these are also the respective minima over noncontextual models on the
selected and complete scenarios.

\subsection{Relation to graph witnesses}
\label{sec:graph-witnesses}

To decide whether several joint outcomes can belong to one deterministic
assignment, we compare the values they give to shared observables.  Whenever
two outcomes disagree, choosing one rules out the other.  We can represent
these exclusions by a graph and use its independence number to bound the
sum of event probabilities in a noncontextual model
\cite{CabelloSeveriniWinter2014}.  For our inequality, we follow the
correspondence in Ref.~\cite[Theorem~1]{deSilva2017} and take one vertex for
each context-labelled complete outcome in $\phi_L^{(u)}$ or $\phi_{K_j}$,
joining pairs that disagree on a shared Weyl observable.  Choosing vertices
with no edges between them---an independent set---then gives outcomes that
agree on overlaps and define a consistent assignment on the observables
they cover.

Summing the vertex probabilities gives the left-hand side of
Eq.~\eqref{eq:full-logical-bell}.  Since distinct outcomes of one context
are joined, an independent set contains at most one outcome per context.
Choosing one from each of the $N=D_A+1+R$ contexts would satisfy every
proposition.  Theorem~\ref{thm:marginal-lbi} rules this out, while the
phase-space assignment attaining its bound supplies $N-1$ vertices.
We therefore recover $N-1=D_A+R$ as the graph's independence number.
To reduce this graph inequality to the marginal expression, we again use
the probability-one constraints of the complete scenario.

For a single-qudit marginal, the complementary outcomes used here also
appear in the two-qudit construction of Ref.~\cite{HowardEtAl2014}.
We set $n=2$ and $k=1$ to compare the two witnesses.  In that construction,
the authors pair the complementary single-qudit projectors with
computational-basis projectors on the second qudit and add all entangled
two-qudit stabilizer projectors.  Their combined sum, expressed using the
spread identity $A_u=\sum_L\Pi_{u+L}-I_A$, is
\begin{equation}
 \Sigma_H=(d^3I_A-A_u)\otimes I_B,
 \label{eq:howard-operator}
\end{equation}
and the orthogonality graph of these projectors has independence number $d^3$
\cite{HowardEtAl2014}.  Since $\Sigma_H$ acts as the identity on $B$, its
expectation depends only on the marginal $\sigma$.  Thus, for every
two-qudit state $\rho$ with this marginal, we obtain
\begin{equation}
 \begin{aligned}
  \Tr(\Sigma_H\rho)-d^3
  &=-\Tr(A_u\sigma)\\
  &=1-\sum_{L\in\mathcal L}p_\rho(E_L).
 \end{aligned}
 \label{eq:howard-comparison}
\end{equation}
The two witnesses therefore give the same quantum excess above their
respective noncontextual bounds.  They impose consistency through different
event families: the earlier construction uses a two-qudit orthogonality
graph, whereas we include valid-outcome propositions for the remaining
Weyl contexts.  Our logical construction applies to a block of any size
$1\leq k<n$ and gives its contextual-fraction bound in the complete scenario.

\section{Product states attaining the marginal bound}
\label{sec:exact-cf}

Consider a family of states $\sigma_\delta$ on $A$ whose Wigner functions
are nonpositive at a fixed phase-space point $u\in V_A$ and take the same
positive value at every other point.  Writing
$W_{\sigma_\delta}(u)=-\delta/D_A$, with $0\leq\delta\leq1$, normalization gives
\begin{equation}
 W_{\sigma_\delta}(a)=
 \begin{cases}
  -\delta/D_A,&a=u,\\[2pt]
  (D_A+\delta)/[D_A(D_A^2-1)],&a\neq u.
 \end{cases}
 \label{eq:single-facet-wigner}
\end{equation}
The trace orthogonality in Eq.~\eqref{eq:phase-point-identities} gives the
inverse expansion $\sigma_\delta=\sum_a W_{\sigma_\delta}(a)A_a$, hence
\begin{equation}
 \sigma_\delta=
 \frac{(D_A+\delta)I_A-(1+D_A\delta)A_u}{D_A^2-1}.
 \label{eq:delta-state}
\end{equation}
The operator is Hermitian and has trace one.  Its eigenvalues on the $+1$
and $-1$ eigenspaces of the displaced parity operator $A_u$
\cite{Gross2006} are $(1-\delta)/(D_A+1)$ and
$(1+\delta)/(D_A-1)$, respectively.  Both are nonnegative for
$0\leq\delta\leq1$, so the profile defines a physical state throughout this
interval.

At $\delta=0$, the Wigner value at $u$ vanishes and all other values are
positive.  The corresponding state is
\begin{equation}
 \sigma_0=\frac{D_A I_A-A_u}{D_A^2-1}.
 \label{eq:boundary-state}
\end{equation}
Denote the state at $\delta=1$ by
\begin{equation*}
 \alpha_-=\sigma_1=\frac{I_A-A_u}{D_A-1}.
\end{equation*}
The operator $(I_A-A_u)/2$ is the projector onto the $-1$ eigenspace of
$A_u$, of rank $(D_A-1)/2$.  Thus $\alpha_-$ is this projector divided by
its rank: it is proportional to the identity within that eigenspace and
vanishes on the orthogonal complement.

\begin{samepage}
\begin{theorem}
\label{thm:exact-cf}
Let $d$ be an odd prime, let $n=k+m$ with $k,m\geq1$, and let
$0\leq\delta\leq1$.  For every normalized state $\eta$ on $B$ with a
nonnegative Gross Wigner function, the complete $n$-qudit Weyl scenario
satisfies
\begin{equation}
 \CF(e_{\sigma_\delta\otimes\eta})
 =\delta=-\Tr(A_u\sigma_\delta)
 =-D_A W_{\sigma_\delta}(u).
 \label{eq:exact-product-cf}
\end{equation}
\end{theorem}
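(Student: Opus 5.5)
The plan is to prove the two inequalities $\CF(e_{\sigma_\delta\otimes\eta})\geq\delta$ and $\CF(e_{\sigma_\delta\otimes\eta})\leq\delta$ separately.

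\emph{Lower bound.} This follows from Corollary~\ref{cor:marginal-cf-bound}. The $A$-marginal of $\sigma_\delta\otimes\eta$ is $\sigma_\delta$. Eq.~\eqref{eq:single-facet-wigner} gives $-D_AW_{\sigma_\delta}(u)=\delta\geq0$, and $\Tr(A_u\sigma_\delta)=D_AW_{\sigma_\delta}(u)$ by Eq.~\eqref{eq:gross-definitions}. Hence $\CF\geq\delta$, and the middle equalities of Eq.~\eqref{eq:exact-product-cf} hold.

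\emph{Upper bound.} I would build an explicit decomposition $e_\rho=(1-\delta)e_{\mathrm{NC}}+\delta e'$. It suffices to decompose the state itself: write $\sigma_\delta=(1-\delta)\sigma_0+\delta\,\alpha_-$. By Eq.~\eqref{eq:delta-state} both sides are affine in $\delta$ and agree at $\delta=0,1$, so this identity is immediate. Tensoring with $\eta$ gives
\[
\rho=(1-\delta)\,\sigma_0\otimes\eta+\delta\,\alpha_-\otimes\eta .
\]
Born probabilities are linear in the state, so $e_\rho=(1-\delta)e_{\sigma_0\otimes\eta}+\delta e_{\alpha_-\otimes\eta}$ entrywise.

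The second component $e'=e_{\alpha_-\otimes\eta}$ comes from a genuine quantum state. Its context distributions are therefore normalized, and its marginals agree on overlaps because overlapping contexts share the same projectors' coarse-grainings. It is thus a compatible remainder in the sense of Eq.~\eqref{eq:cf-decomposition}.

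For the first component, Eq.~\eqref{eq:tensor-factorization} gives $W_{\sigma_0\otimes\eta}(a,b)=W_{\sigma_0}(a)W_\eta(b)$. Here $W_{\sigma_0}\geq0$ by Eq.~\eqref{eq:single-facet-wigner} at $\delta=0$, and $W_\eta\geq0$ by hypothesis. Sampling $x$ with this weight and outputting $x+K$ in each context is then the noncontextual model of Section~\ref{sec:complete-model}. This shows $\CF\leq\delta$.

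I expect no serious obstacle. The only points needing care are verifying that $\alpha_-$ and $\sigma_0$ are valid states, which follows from the eigenvalue computation given after Eq.~\eqref{eq:delta-state}, and confirming that a quantum-state empirical model qualifies as an admissible remainder $e'$. The degenerate cases $\delta=0$ and $\delta=1$ are covered by the same decomposition; at $\delta=1$ the bound $\CF\leq1$ is trivial anyway.
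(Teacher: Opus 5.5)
Your proposal is correct and follows essentially the same route as the paper: the lower bound comes from Corollary~\ref{cor:marginal-cf-bound} applied to the marginal $\sigma_\delta$, and the upper bound from the decomposition $\sigma_\delta\otimes\eta=(1-\delta)(\sigma_0\otimes\eta)+\delta(\alpha_-\otimes\eta)$, whose first term is Wigner-nonnegative and hence noncontextual. You also check two details the paper leaves implicit: that the decomposition holds because $\sigma_\delta$ is affine in $\delta$, and that the quantum remainder $e_{\alpha_-\otimes\eta}$ is an admissible compatible model.
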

\end{samepage}

\begin{proof}
The quantum-state decomposition
\begin{equation}
 \sigma_\delta\otimes\eta
 =(1-\delta)(\sigma_0\otimes\eta)
  +\delta(\alpha_-\otimes\eta)
 \label{eq:product-decomposition}
\end{equation}
also decomposes the full probability model, since Born probabilities are
linear in the state.  Equation~\eqref{eq:single-facet-wigner} at
$\delta=0$ and Eq.~\eqref{eq:tensor-factorization} show that
$\sigma_0\otimes\eta$ has a nonnegative Gross Wigner function.  It therefore
has a noncontextual model for every maximal context in the complete Weyl
scenario, including contexts that do not factor across the $A|B$ split;
nonmaximal outcomes follow by coarse-graining.  The first term in
Eq.~\eqref{eq:product-decomposition} supplies a noncontextual component of
weight $1-\delta$, so
\begin{equation*}
 \CF(e_{\sigma_\delta\otimes\eta})\leq\delta.
\end{equation*}
The $A$-marginal is $\sigma_\delta$.  Corollary~\ref{cor:marginal-cf-bound}
and Eq.~\eqref{eq:single-facet-wigner} give the lower bound
\begin{equation*}
 \CF(e_{\sigma_\delta\otimes\eta})
 \geq-D_A W_{\sigma_\delta}(u)=\delta.
\end{equation*}
The two bounds agree.
\end{proof}

At $\delta=0$, the product model is noncontextual; at $\delta=1$,
$\alpha_-\otimes\eta$ has contextual fraction one.  For every $\delta$ in
this interval, the product state also minimizes the contextual fraction among all
joint states with the same marginals.  Indeed,
Corollary~\ref{cor:marginal-cf-bound} gives $\CF(e_\rho)\geq\delta$ for every
state with $A$-marginal $\sigma_\delta$, and
Theorem~\ref{thm:exact-cf} supplies an attaining state with $B$-marginal
$\eta$.  Hence, for $0\leq\delta\leq1$ and $\eta$ as in the theorem,
\begin{equation*}
 \min_{\substack{\rho:\ \Tr_B\rho=\sigma_\delta\\
                        \Tr_A\rho=\eta}}
 \CF(e_\rho)=\delta,
\end{equation*}
where the minimum is over quantum states, including correlated extensions.

Wigner sum negativity measures the total negative weight of the Wigner
function \cite{VeitchEtAl2014}:
\[
 \sn(\rho)=\sum_{a:W_\rho(a)<0}\lvert W_\rho(a)\rvert.
\]
Since $\sigma_\delta$ has only one negative Wigner value for $\delta>0$,
its sum negativity and the exact contextual fraction of the product states
in Theorem~\ref{thm:exact-cf} obey
\begin{align}
 \sn(\sigma_\delta)&=\frac{\delta}{D_A},
 \label{eq:single-facet-magic}\\
 \CF(e_{\sigma_\delta\otimes\eta})&=D_A\sn(\sigma_\delta).
 \label{eq:cf-sum-negativity}
\end{align}
Both quantities vanish at $\delta=0$.

When $A$ is one qutrit, the negative-parity eigenspace is one-dimensional,
so $\alpha_-$ is pure.  At $u=0$, it is
$\alpha_-=|\mathrm S\rangle\langle\mathrm S|$, where the strange state
\begin{equation}
 |\mathrm S\rangle=\frac{|1\rangle-|2\rangle}{\sqrt2}
 \label{eq:strange-state}
\end{equation}
has Wigner value $-1/3$ at the origin and $1/6$ at every other point
\cite{VeitchEtAl2014,JainPrakash2020}.  Other choices of $u$ give its
displaced versions.  Mixing this state with the maximally mixed state
gives the depolarized strange-state family
\begin{equation}
 \sigma(p)=(1-p)\frac{I_A}{3}+p\alpha_-,
 \qquad 0\leq p\leq1,
 \label{eq:qutrit-strange-ray}
\end{equation}
where $p$ is the weight of the strange state and $1-p$ is the depolarizing-noise
fraction on $A$.  For $0\leq p\leq1/4$, the state lies between $I_A/3$ and
$\sigma_0$, both of which are Wigner-nonnegative.  Its product with any
$\eta$ as in Theorem~\ref{thm:exact-cf} therefore has contextual
fraction zero.  For $1/4\leq p\leq1$, it equals $\sigma_\delta$ with
$\delta=(4p-1)/3$, and the theorem gives contextual fraction $\delta$.
Combining the two intervals gives the exact dependence on depolarizing noise:
\begin{equation*}
 \CF(e_{\sigma(p)\otimes\eta})
 =\left[\frac{4p-1}{3}\right]_+,
 \qquad 0\leq p\leq1.
\end{equation*}
The contextual fraction becomes nonzero above $p=1/4$ and increases linearly
to one at $p=1$.  Each of the four target events $E_L$ in
Eq.~\eqref{eq:qutrit-test} has probability $(1-p)/3$, so their sum is $4(1-p)/3$.
Its deficit below the noncontextual bound of one is exactly the contextual
fraction above threshold.  The product-state decomposition proves that
this measured lower bound accounts for all the contextuality in the
complete scenario.

\section{Discussion}
\label{sec:discussion}

In this article, we have constructed logical Bell inequalities that turn
marginal Wigner negativity into a lower bound on the contextual fraction of
a composite state in the complete Weyl scenario.  Since the probabilities
in our inequalities depend only on a reduced state, the bound applies to
every joint preparation with that marginal, irrespective of its correlations
with the complementary subsystem.

The available reduced states determine which marginal witnesses can be
evaluated.  A larger subsystem can retain negativity that disappears from
its smaller marginals, while the dimension scaling can make a smaller
subsystem's witness stronger.  The examples in
Section~\ref{sec:subsystem-choice} exhibit both possibilities.  Thus the
useful measurement target is set by the negative Wigner values and the
associated probability sums, rather than by subsystem size alone.

Certifying contextuality from these probabilities relies on the relations
between measurements in the complete
Weyl scenario.  The selected contexts admit a noncontextual
model for every compatible probability table.  A noncontextual explanation
of the complete scenario must also assign the same value to each shared
observable and respect the valid joint-outcome relations, including products
of commuting observables.  For Weyl measurements, these valid-outcome
relations hold with certainty for every quantum state.  Their probability-one
contributions allow us to evaluate the bound using only the selected-context
probabilities.

An attained lower bound identifies the least contextual fraction compatible
with the known reduced states.  The product family in
Section~\ref{sec:exact-cf} gives such a minimum for any normalized
Wigner-nonnegative state of the other subsystem.  Within this family,
changing that state changes joint measurement statistics while leaving the
contextual fraction fixed by the chosen marginal.

How much can correlations change this value while preserving both reduced
states?  Related continuous-variable examples have identical
Wigner-nonnegative marginals but different global Wigner negativities,
even for separable states \cite{TaghiabadiEtAl2016}.
The preparations in Appendix~\ref{app:fixed-marginals} have the
same distributions of the local displaced-parity signs.  Independent
sampling attains the minimum allowed by the marginals, while pairing
opposite signs attains contextual fraction one.  For $0\leq\delta<1$,
their contextual fractions differ even though both preparations are
separable.  The undetermined contextuality can therefore reside in these
correlations without requiring entanglement across the partition.

For more general reduced states, an open question is when the marginal
inequalities determine the least contextual fraction compatible with them.
Several negative Wigner values, possibly from different subsystems, give
constraints on the same probability model.  A common contextual component
can account for several violations, so their sum is not generally a lower
bound on its weight.  Optimizing these constraints together may improve on
the largest individual violation.  Determining when a physical joint state
attains that stronger bound would extend the exact marginal inference
beyond the product family considered here.

Joint measurements offer a way to distinguish preparations whose reduced
states coincide.  In our examples, the distinction is expressed by the
correlation between local parity signs.  We would like to know which additional joint
measurements, together with the known marginals, suffice to determine the
contextual fraction.

The spread supplies a phase-space cover, and the projector
identity makes the inequality test the sign of a chosen marginal Wigner
value.  Analogous geometrical constructions in other measurement scenarios
could target facets of the noncontextual probability polytope, the convex
hull of consistent deterministic outcome assignments.  The completeness
theorem in Ref.~\cite[Theorem~VI.5]{AbramskyHardy2012} characterizes these
polytopes using general logical Bell inequalities with nonnegative integer
weights.  The question is which facets admit simple
constructions from the geometry of measurement outcomes, and which quantum
preparations violate the resulting inequalities.

The connection between contextuality and magic-state computation gives a
physical motivation for understanding how contextuality behaves in composite
systems \cite{HowardEtAl2014}.  Resource states may form part of larger
preparations, where their properties must be understood together with the
correlations they share with other subsystems.  Our results give quantitative
meaning to this relationship through the information contained in reduced
states.  This offers a way to investigate how the contextuality relevant to
quantum computation depends on subsystem resources and their correlations.

\begin{acknowledgments}
S.S.\ would like to thank Dr.~Amolak Ratan Kalra and Dr.~Yash Totani
for fruitful discussions which guided the results, and Avantika Agarwal
for useful technical input.

This work was supported by the Natural Sciences and Engineering Research
Council of Canada. Wilfrid Laurier University is located in the traditional
territory of the Neutral, Anishnawbe and Haudenosaunee peoples. We thank
them for allowing us to conduct this research on their land.
\end{acknowledgments}

\appendix

\section{Fixed marginals and displaced-parity correlations}
\label{app:fixed-marginals}

\subsection{Preparations with the same reduced states}

Consider mixtures of fixed local states with definite displaced parity.
The probabilities of the local signs determine the reduced states, while
their joint distribution determines the correlations between the
subsystems.  We use the state $\sigma_\delta$ of
Section~\ref{sec:exact-cf} on $A$ and choose a state on $B$ whose parity
probabilities allow the signs to be sampled either independently or in
opposite pairs.

Let $d$ be an odd prime and $n=k+m$ with $k,m\geq1$, and write
$D_A=d^k$ and $D_B=d^m$ for the subsystem dimensions.  Retain the point $u\in V_A$ specifying
$\sigma_\delta$ and choose $v\in V_B$; both points remain fixed throughout
this appendix.  The displaced-parity operators $A_u$ and $A_v$ have
eigenvalues $+1$ and $-1$.  The state $\alpha_-$ from
Section~\ref{sec:exact-cf} is maximally mixed within the $-1$ eigenspace of
$A_u$.  Define the corresponding state for the $+1$ eigenspace by
\begin{equation}
 \alpha_+=\frac{I_A+A_u}{D_A+1}.
 \label{eq:parity-sector-states}
\end{equation}
On $B$, use the two states
\begin{equation}
 \beta_+=\frac{I_B+A_v}{D_B+1},
 \qquad
 \beta_-=\frac{I_B-A_v}{D_B-1}.
 \label{eq:beta-sector-states}
\end{equation}
The projectors $(I_A\pm A_u)/2$ and $(I_B\pm A_v)/2$ have ranks
$(D_A\pm1)/2$ and $(D_B\pm1)/2$, respectively, by
Eq.~\eqref{eq:phase-point-identities}.  Dividing each projector by its rank
gives these normalized states.  Preparing $\alpha_+$ or $\alpha_-$ therefore
gives the corresponding $A_u$ parity sign with certainty, and the same
holds for $\beta_+$ and $\beta_-$ on $B$.

The probability of a parity sign is the trace of $\sigma_\delta$ over the
corresponding eigenspace.  For $0\leq\delta\leq1$, multiplying the eigenvalues
from Section~\ref{sec:exact-cf} by the eigenspace ranks gives
\begin{equation*}
 p_+(\delta)=\frac{1-\delta}{2},
 \qquad
 p_-(\delta)=\frac{1+\delta}{2}.
\end{equation*}
The subscript on $p_\pm$ refers to the sign on $A$.  Choose the $B$ marginal
so that its positive sign has probability $p_-(\delta)$ and its negative
sign has probability $p_+(\delta)$.  The prescribed reduced states are then
\begin{equation}
 \begin{aligned}
  \sigma_\delta&=p_+(\delta)\alpha_++p_-(\delta)\alpha_-,\\
  \tau_\delta&=p_-(\delta)\beta_++p_+(\delta)\beta_-.
 \end{aligned}
 \label{eq:sector-mixtures}
\end{equation}
At $\delta=0$, the two signs on $B$ are equally likely and
\begin{equation}
 \tau_0=\frac{\beta_++\beta_-}{2}
 =\frac{D_B I_B-A_v}{D_B^2-1}.
 \label{eq:tau-boundary-state}
\end{equation}
The same family can therefore be written as
\begin{equation}
 \tau_\delta=(1-\delta)\tau_0+\delta\beta_+.
 \label{eq:tau-delta}
\end{equation}
Evaluating its Wigner function with
Eqs.~\eqref{eq:gross-definitions} and \eqref{eq:phase-point-identities}
gives
\begin{equation}
 W_{\tau_\delta}(b)=
 \begin{cases}
  \delta/D_B,&b=v,\\[2pt]
  (D_B-\delta)/[D_B(D_B^2-1)],&b\neq v.
 \end{cases}
 \label{eq:tau-delta-wigner}
\end{equation}
Thus $\tau_\delta$ has a nonnegative Gross Wigner function throughout the
interval.  In particular, $\tau_0$ has a zero Wigner value at $v$ and equal
positive values elsewhere.

Sampling the two local signs independently and preparing the corresponding
states $\alpha_\pm$ and $\beta_\pm$ gives the product of these reduced
states.  Alternatively, sample the sign on $A$ with probabilities
$p_\pm(\delta)$ and always choose the opposite sign on $B$.  These two
choices give the joint states
\begin{equation}
 \begin{aligned}
  \rho_{\mathrm{prod}}(\delta)&=\sigma_\delta\otimes\tau_\delta,\\
  \rho_{\mathrm{corr}}(\delta)&=
    p_+(\delta)\alpha_+\otimes\beta_-\\
   &\quad+p_-(\delta)\alpha_-\otimes\beta_+.
 \end{aligned}
 \label{eq:equal-marginal-states}
\end{equation}
Each local state in these mixtures has trace one, so tracing out either
subsystem recovers the corresponding marginal in
Eq.~\eqref{eq:sector-mixtures}.  Both preparations are convex mixtures of
product states and hence separable.  Since each local state is proportional
to a parity projector, the two joint states are also diagonal in the
same product eigenbasis of $A_u$ and $A_v$.

The mean local parities are the differences between the positive- and
negative-sign probabilities.  For the product preparation, independence
makes the joint parity expectation the product of these means.  In the
correlated preparation, every term has opposite signs and hence joint
parity $-1$.  The expectations are
\begin{subequations}
\label{eq:parity-expectations}
\begin{equation}
 \Tr(A_u\sigma_\delta)=-\delta,
 \qquad
 \Tr(A_v\tau_\delta)=\delta,
\end{equation}
\begin{equation}
 \begin{aligned}
  \Tr[(A_u\otimes A_v)\rho_{\mathrm{prod}}(\delta)]&=-\delta^2,\\
  \Tr[(A_u\otimes A_v)\rho_{\mathrm{corr}}(\delta)]&=-1.
 \end{aligned}
 \label{eq:perfect-opposition}
\end{equation}
\end{subequations}

\subsection{Contextual fractions at fixed marginals}

Equation~\eqref{eq:tensor-factorization} identifies the joint parity
$A_u\otimes A_v$ with the full-system phase-point operator $A_{u\oplus v}$.
Its expectation can therefore be obtained from a spread of joint Weyl
measurements on all $n$ qudits, using Eq.~\eqref{eq:spread-projector-sum}.

\begin{theorem}
\label{thm:fixed-marginals}
Let $d$ be an odd prime, let $n=k+m$ with $k,m\geq1$, and let
$0\leq\delta\leq1$.  Fix $u\in V_A$ and $v\in V_B$, and define
$\sigma_\delta$, $\tau_\delta$, $\rho_{\mathrm{prod}}(\delta)$ and
$\rho_{\mathrm{corr}}(\delta)$ as above.  Both joint states are separable,
diagonal in a common product eigenbasis of $A_u$ and $A_v$, and have
$A$-marginal $\sigma_\delta$ and $B$-marginal $\tau_\delta$.
Among all quantum joint states $\rho$ with these ordered marginals, the
complete $n$-qudit Weyl scenario satisfies
\begin{equation}
 \begin{aligned}
  \min_\rho\CF(e_\rho)
   &=\CF(e_{\rho_{\mathrm{prod}}(\delta)})=\delta,\\
  \max_\rho\CF(e_\rho)
   &=\CF(e_{\rho_{\mathrm{corr}}(\delta)})=1.
 \end{aligned}
 \label{eq:fixed-marginal-range}
\end{equation}
\end{theorem}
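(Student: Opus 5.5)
The structural claims come first. Separability and the shared marginals were established in the construction: each term of $\rho_{\mathrm{prod}}(\delta)$ and $\rho_{\mathrm{corr}}(\delta)$ is a product of normalized local states, and tracing out recovers Eq.~\eqref{eq:sector-mixtures}. Each of $\alpha_\pm,\beta_\pm$ is a multiple of a spectral projector of $A_u$ or $A_v$. Any product eigenbasis adapted to the eigenspaces of $A_u$ and $A_v$ therefore diagonalizes every term, and hence both joint states.

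For the minimum, I would reuse Theorem~\ref{thm:exact-cf}. Equation~\eqref{eq:tau-delta-wigner} shows that $\tau_\delta$ is normalized and Wigner-nonnegative for $0\leq\delta\leq1$. Taking $\eta=\tau_\delta$ then gives $\CF(e_{\rho_{\mathrm{prod}}(\delta)})=\delta$. Corollary~\ref{cor:marginal-cf-bound} applied at the point $u$ gives $\CF(e_\rho)\geq -D_AW_{\sigma_\delta}(u)=\delta$ for every $\rho$ with $A$-marginal $\sigma_\delta$. This lower bound depends only on the $A$-marginal, so the product state attains the minimum over all $\rho$ with both ordered marginals fixed.

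For the maximum, $\CF\leq1$ holds trivially by definition, so it suffices to show $\CF(e_{\rho_{\mathrm{corr}}(\delta)})\geq1$. By Eq.~\eqref{eq:tensor-factorization}, $A_{u\oplus v}=A_u\otimes A_v$. The corollary is stated for proper subsystems, so I would not invoke it directly. Instead I would rerun its convex-decomposition argument with the whole system in place of $A$: take a symplectic spread of $V_n$ and the point $u\oplus v$. The selected contexts are then full Lagrangian subspaces of $V_n$, with no $B$ factor. The events $E_K$ are the outcomes $(u\oplus v)+K$, and the other contexts contribute the probability-one propositions $\phi_{K_j}$. The unsatisfiability argument from the proof of Theorem~\ref{thm:marginal-lbi} is unchanged, because the deterministic assignments are the points $x\in V_n$ and the target cosets cover $V_n$. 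This yields the noncontextual bound $\sum_K p(E_K)\geq1$. By Eq.~\eqref{eq:spread-projector-sum}, the quantum value is $1+\Tr(A_{u\oplus v}\rho)$. The decomposition argument of Corollary~\ref{cor:marginal-cf-bound} then gives $\CF(e_\rho)\geq-\Tr[(A_u\otimes A_v)\rho]$.

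Applying this to $\rho_{\mathrm{corr}}(\delta)$ and using Eq.~\eqref{eq:perfect-opposition} gives $\CF\geq1$, hence $\CF=1$ and the maximum is attained. This is simply the statement $\CF(e_\rho)\geq[-D_nW_\rho(x)]_+$ for the full state. The main point to check is that nothing in the proof of Theorem~\ref{thm:marginal-lbi} or Corollary~\ref{cor:marginal-cf-bound} used $m\geq1$ in an essential way. The only role of $S$ was to pad contexts to maximality, and with $B$ empty the spread members of $V_n$ are already maximal contexts. The characterization in Eq.~\eqref{eq:deterministic-values} needs $n\geq2$, and this holds because $k,m\geq1$.
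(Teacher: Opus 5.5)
Your proposal is correct and follows essentially the same route as the paper. The minimum comes from Corollary~\ref{cor:marginal-cf-bound} together with Theorem~\ref{thm:exact-cf} at $\eta=\tau_\delta$, and the maximum from rerunning the logical Bell and convex-decomposition argument on a spread of $V_n$ at $u\oplus v$, using $\Tr[(A_u\otimes A_v)\rho_{\mathrm{corr}}(\delta)]=-1$; stating the general bound $\CF(e_\rho)\geq-\Tr(A_{u\oplus v}\rho)$ before specializing, as you do, is only a cosmetic difference.
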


\begin{proof}
The marginal and separability statements follow from the construction in
Eqs.~\eqref{eq:sector-mixtures} and \eqref{eq:equal-marginal-states}.
For every quantum state $\rho$ with the prescribed marginals,
Corollary~\ref{cor:marginal-cf-bound} and
Eq.~\eqref{eq:single-facet-wigner} give
\begin{equation*}
 \CF(e_\rho)\geq-D_A W_{\sigma_\delta}(u)=\delta.
\end{equation*}
The state $\tau_\delta$ is normalized and Wigner-nonnegative by
construction.  Theorem~\ref{thm:exact-cf} with $\eta=\tau_\delta$ therefore
gives $\CF(e_{\rho_{\mathrm{prod}}(\delta)})=\delta$, proving the stated
minimum.

For the correlated preparation, choose a symplectic spread
$\mathcal L_n$ of $V_n$, independently of the spread on $A$ used in the
marginal inequality.  For each context $J\in\mathcal L_n$, let $E_J$ be
the coset outcome $(u\oplus v)+J$.  Let $\phi_J^{(u\oplus v)}$ state that the outcome
is any other valid coset:
\begin{equation*}
 \phi_J^{(u\oplus v)}=
 \bigvee_{\substack{C\in V_n/J\\C\neq(u\oplus v)+J}}\phi_{J,C},
\end{equation*}
where $\phi_{J,C}$ is the conjunction of Weyl-eigenvalue statements in
Eq.~\eqref{eq:probability-one-proposition}.  Thus
$p(\phi_J^{(u\oplus v)})=1-p(E_J)$.  For every remaining maximal context $K$, include
the valid-outcome proposition $\phi_K$ from the same equation, with
$p(\phi_K)=1$.

Suppose a global truth assignment satisfied all the $\phi_J^{(u\oplus v)}$ and
$\phi_K$.  Each $\phi_J^{(u\oplus v)}$ is a disjunction of valid outcomes, so the
assignment would specify a valid joint outcome in every maximal context.
The conjunctions use the same Weyl-value propositions on overlapping
contexts and enforce the products of commuting observables.  Hence
Eq.~\eqref{eq:deterministic-values} would give a single phase-space point
$x\in V_n$ labelling this assignment.  To satisfy $\phi_J^{(u\oplus v)}$, its outcome
$x+J$ would have to differ from $(u\oplus v)+J$, or equivalently,
$x-(u\oplus v)\notin J$.  This cannot hold for every $J\in\mathcal L_n$,
since the spread covers $V_n$.  The full family is therefore jointly
unsatisfiable.

Let $N$ be the number of propositions in this family, one for each maximal
context.  Equation~\eqref{eq:general-logical-bell} gives the full inequality
\begin{equation}
 \sum_{J\in\mathcal L_n}p(\phi_J^{(u\oplus v)})
 +\sum_{K\notin\mathcal L_n}p(\phi_K)\leq N-1,
 \label{eq:full-system-logical-bell}
\end{equation}
where the second sum runs over the remaining maximal Weyl contexts.
A deterministic assignment labelled by $x\neq u\oplus v$ makes exactly one
$\phi_J^{(u\oplus v)}$ false and satisfies all the other propositions, so it attains
this bound.  There are $d^n+1$ spread contexts.  Substituting
$p(\phi_J^{(u\oplus v)})=1-p(E_J)$ and $p(\phi_K)=1$ therefore reduces the inequality
to $\sum_{J\in\mathcal L_n}p(E_J)\geq1$.

For the correlated state, the spread-projector identity
\eqref{eq:spread-projector-sum} and the joint parity expectation
\eqref{eq:perfect-opposition} give
\begin{equation}
 \begin{aligned}
  &\sum_{J\in\mathcal L_n}p_{\rho_{\mathrm{corr}}(\delta)}(E_J)\\
  &\quad=\Tr[(I+A_{u\oplus v})\rho_{\mathrm{corr}}(\delta)]\\
  &\quad=0.
 \end{aligned}
 \label{eq:full-system-zero-probability-sum}
\end{equation}
Each target-outcome probability is nonnegative, so each must vanish.
Every $\phi_J^{(u\oplus v)}$ therefore has probability one, as does every $\phi_K$.
The full probability sum in Eq.~\eqref{eq:full-system-logical-bell} equals
$N$, its algebraic upper bound.  The normalized logical Bell violation is
thus one.

As in the proof of Corollary~\ref{cor:marginal-cf-bound}, consider any
admissible decomposition of the probability model,
\begin{equation*}
 e_{\rho_{\mathrm{corr}}(\delta)}
  =(1-\lambda)e_{\mathrm{NC}}+\lambda e',
 \qquad 0\leq\lambda\leq1.
\end{equation*}
The full sum is at most $N-1$ for the noncontextual component and at most
$N$ for $e'$.  Its value on the correlated preparation therefore obeys
\begin{equation*}
 N\leq(1-\lambda)(N-1)+\lambda N=N-1+\lambda.
\end{equation*}
Thus every admissible decomposition has $\lambda=1$.  This proves
$\CF(e_{\rho_{\mathrm{corr}}(\delta)})=1$.  Since contextual fraction is
at most one for every probability model, the correlated preparation attains
the maximum over joint states with the prescribed marginals.
\end{proof}

For $0\leq\delta<1$, the two preparations have different contextual
fractions despite having the same ordered pair of reduced states.  At
$\delta=0$, both reduced states have nonnegative Wigner functions, and the
product state is globally Wigner-nonnegative with contextual fraction zero.
The correlated state instead has
\begin{equation*}
 W_{\rho_{\mathrm{corr}}(0)}(u,v)=-\frac{1}{D_A D_B}
\end{equation*}
and contextual fraction one.  At $\delta=1$, the parity signs are certain:
negative on $A$ and positive on $B$.  Independent sampling and opposite
pairing then give the same state $\alpha_-\otimes\beta_+$.

The preparation with contextual fraction one uses the particular
complementary marginal $\tau_\delta$.  For any normalized Wigner-nonnegative
quantum state $\eta$ on $B$,
Section~\ref{sec:exact-cf} establishes that the minimum contextual fraction
compatible with $(\sigma_\delta,\eta)$ is $\delta$, attained by their
product.  If $\eta$ is pure, every compatible joint state must be that
product state, so its contextual fraction is fixed at $\delta$.

\bibliographystyle{apsrev4-2}
\bibliography{main}

\end{document}

%% file: figures/measurement-setup.tex
\begingroup
\begin{tikzpicture}[x=0.60cm,y=0.60cm,font=\footnotesize,
  every node/.style={text=black}]
  \begin{scope}[xshift=1.4cm]
    \node[anchor=west] at (-2.1,2.2) {(a)};
    \draw[black,line width=0.7pt] (-1.25,0)--(1.25,0);
    \foreach \q in {-1,1}{
      \draw[black,dashed,line width=0.6pt] (-1.25,\q)--(1.25,\q);
    }
    \foreach \p in {-1,0,1}{\foreach \q in {-1,0,1}{
      \fill[white] (\p,\q) circle[radius=2.5pt];
      \fill[black] (\p,\q) circle[radius=1.65pt];
    }}
    \draw[black,line width=0.65pt] (1,1) circle[radius=3.8pt];
    \node[anchor=south] at (1,1.3) {$x_A$};
    \foreach \q/\lab in {1/1,0/0,-1/2}{
      \node[anchor=east] at (-1.4,\q) {$q=\lab$};
    }
    \node[anchor=west] at (1.4,1) {$(0,1)+L_1$};
    \node[anchor=west] at (1.4,0) {$L_1$};
    \node[anchor=west] at (1.4,-1) {$(0,2)+L_1$};
    \foreach \p/\lab in {-1/2,0/0,1/1}{\node at (\p,-1.55) {$\lab$};}
    \node at (0,-2.15) {$p$};
  \end{scope}
  \begin{scope}[xshift=6.45cm]
    \node[anchor=west] at (-2.1,2.2) {(b)};
    \draw[black,line width=0.6pt] (-1.25,0)--(1.25,0);
    \draw[black,line width=0.6pt] (0,-1.25)--(0,1.25);
    \draw[black,line width=0.6pt] (-1.25,-1.25)--(1.25,1.25);
    \draw[black,line width=0.6pt] (-1.25,1.25)--(1.25,-1.25);
    \foreach \p in {-1,0,1}{\foreach \q in {-1,0,1}{
      \fill[white] (\p,\q) circle[radius=2.5pt];
      \fill[black] (\p,\q) circle[radius=1.65pt];
    }}
    \draw[black,line width=0.65pt] (1,1) circle[radius=3.8pt];
    \node[anchor=south west] at (1.25,1.25) {$L_3$};
    \node[anchor=west] at (1.35,0) {$L_1$};
    \node[anchor=north west] at (1.25,-1.25) {$L_4$};
    \node[anchor=south] at (0,1.35) {$L_2$};
    \node[anchor=north west,fill=white,inner sep=0.7pt] at (.1,-.1) {$u$};
  \end{scope}
\end{tikzpicture}
\endgroup